\newtheorem{corollary}{Corollary}
\newtheorem{lemma}{Lemma}
\newtheorem{proposition}{Proposition}
\newtheorem{theorem}{Theorem}
\theoremstyle{remark}
\newtheorem{remark}{Remark}
\begin{document}

\title{Computing quantum discord is NP-complete}

\author{Yichen Huang\\Department of Physics, University of California, Berkeley, CA 94720, USA}

\date{\today}

\maketitle

\begin{abstract}

We study the computational complexity of quantum discord (a measure of quantum correlation beyond entanglement), and prove that computing quantum discord is NP-complete. Therefore, quantum discord is computationally intractable: the running time of any algorithm for computing quantum discord is believed to grow exponentially with the dimension of the Hilbert space so that computing quantum discord in a quantum system of moderate size is not possible in practice. As by-products, some entanglement measures (namely entanglement cost, entanglement of formation, relative entropy of entanglement, squashed entanglement, classical squashed entanglement, conditional entanglement of mutual information, and broadcast regularization of mutual information) and constrained Holevo capacity are NP-hard/NP-complete to compute. These complexity-theoretic results are directly applicable in common randomness distillation, quantum state merging, entanglement distillation, superdense coding, and quantum teleportation; they may offer significant insights into quantum information processing. Moreover, we prove the NP-completeness of two typical problems: linear optimization over classical states and detecting classical states in a convex set, providing evidence that working with classical states is generically computationally intractable.

\end{abstract}

\section{Introduction}

Quite a few fundamental concepts in quantum mechanics do not have classical analogs: uncertainty relations \cite{Bec75, BM75, Hua11, Hua12, Rob29}, quantum nonlocality \cite{CHSH69, EPR35, HHHH09, PV07}, etc. Quantum entanglement \cite{HHHH09, PV07}, defined based on the notion of local operations and classical communication (LOCC), is the most prominent manifestation of quantum correlation. It is a resource in quantum information processing, enabling tasks such as superdense coding \cite{BW92}, quantum teleportation \cite{BBC+93} and quantum state merging \cite{HOW05, HOW07}. Various entanglement measures \cite{HHHH09, PV07} are reported to quantify entanglement. However, nontrivial quantum correlation also exists in certain separable (not entangled) states. For instance, deterministic quantum computation with one qubit (DQC1) \cite{KL98} is a model of mixed-state quantum computation with little entanglement. It is argued \cite{DSC08} that quantum discord \cite{HV01, OZ01} (a measure of quantum correlation beyond entanglement; see section \ref{QD} for its definition) is responsible for the quantum speed-up over classical algorithms. Quantum discord is also a useful concept in common randomness distillation \cite{DW04}, quantum state merging \cite{CAB+11, MD11, MD13}, entanglement distillation \cite{MD13, SKB11}, superdense coding \cite{MD13}, quantum teleportation \cite{MD13}, etc, and has established quantum discord (and related measures of quantum correlation) as an active research topic over the past few years \cite{MBC+12}. Nevertheless, computing quantum discord is difficult. Despite considerable effort, few analytical results are known even for `simple' and useful states (e.g. two-qubit $X$ states \cite{ARA10, ARA10E, CZY+11, Dil08, Hua13a, Hua14, LMXW11, Sar09}). Generally, quantum discord can only be computed numerically.

The notion of NP-completeness \cite{Coo71} is fundamental and remarkable in computational complexity theory. NP-complete problems are the hardest in NP in the sense that an efficient algorithm for any NP-complete problem implies efficient algorithms for all problems in NP, and NP-hard problems are at least as hard as NP-complete problems. An NP-hard/NP-complete problem is computationally intractable: the running time of any algorithm for the problem is believed to grow exponentially with the input size. The NP-completeness of the separability problem (detecting whether a given state is separable) was first proved in \cite{Gur03, Gur04}; see \cite{Gha10, Ioa07} for technical improvements. This may be the reason why a lot of effort is devoted to entanglement criteria \cite{DGCZ00, GT09, Hor97, HHHH09, Hua10, Hua10E, Hua13, Per96, Sim00}, which are simple sufficient conditions for entanglement. The classicality problem (detecting whether a given state has zero quantum discord) can be solved in polynomial time \cite{CCMV11, DVB10, HWZ11}, but the computational complexity of quantum discord is not known.

Here we prove that computing quantum discord is NP-complete (theorem \ref{main}). Therefore, the running time of any algorithm for computing quantum discord is believed to grow exponentially with the dimension of the Hilbert space, so that computing quantum discord in a quantum system of moderate size is not possible in practice. As by-products, some entanglement measures (namely entanglement cost \cite{BDSW96}, entanglement of formation \cite{BDSW96}, relative entropy of entanglement \cite{VPRK97}, squashed entanglement \cite{CW04}, classical squashed entanglement, conditional entanglement of mutual information \cite{YHW08}, and broadcast regularization of mutual information \cite{PCMH09}; theorem \ref{EM}) and constrained Holevo capacity \cite{Sho04} (corollary \ref{cHolevo}) are NP-hard/NP-complete to compute. As direct applications (one-way), distillable common randomness, regularized one-way classical deficit, entanglement consumption in extended quantum state merging, and minimum loss due to decoherence of the yield of a family of protocols are also NP-hard/NP-complete to compute; such complexity-theoretic results may offer significant insights into quantum information processing. Moreover, we prove the NP-completeness of the following two typical problems: linear optimization over classical states (proposition \ref{prop1}) and detecting whether there are classical states in a given convex set (proposition \ref{prop2}). The former is the simplest optimization problem over classical states, and the latter is just one step further than the classicality problem. Conceptually, the NP-completeness of these two problems provides evidence that working with classical states is generically computationally intractable. We conclude with some interesting open problems and research directions.

\section{NP-hardness/NP-completeness of computing entanglement measures}

Let us briefly recall the definitions of some entanglement measures (see the review papers \cite{HHHH09, PV07} for details). Entanglement cost $E_C(\rho)$ \cite{BDSW96} is the minimum rate $j/k$ to convert $j$ copies of the two-qubit maximally entangled state $|\psi\rangle=(|00\rangle+|11\rangle)/\sqrt2$ to $k$ copies of the bipartite state $\rho$ by LOCC with vanishing error in the asymptotic limit $j,k\rightarrow+\infty$. Conversely, distillable entanglement $E_D(\rho)$ \cite{BDSW96} is the maximum rate $j/k$ to convert $\rho^{\otimes k}$ to $|\psi\rangle^{\otimes j}$ by LOCC with vanishing error in the asymptotic limit. Entanglement of formation \cite{BDSW96} is defined as
\begin{equation}
E_F(\rho_{AB})=\inf_{\{p_i,|\psi_i\rangle\}}\sum_ip_iS(\rho_A^i),
\label{Ef}
\end{equation}
where the infimum is taken over all ensembles of pure states $\{p_i,|\psi_i\rangle\}$ satisfying $\rho_{AB}=\sum_ip_i|\psi_i\rangle\langle\psi_i|$, and
\begin{equation}
S(\rho_A^i)=-\mathrm{tr}(\rho_A^i\log\rho_A^i)
\end{equation}
is the von Neumann entropy of the reduced density matrix $\rho_A^i=\mathrm{tr}_B|\psi_i\rangle\langle\psi_i|$ or the entanglement entropy of $|\psi_i\rangle$. The relative entropy of entanglement \cite{VPRK97}
\begin{equation}
E_R(\rho)=\inf_{\sigma\in\cal{S}}S(\rho\|\sigma)=\inf_{\sigma\in\cal{S}}\mathrm{tr}(\rho\log\rho-\rho\log\sigma)
\end{equation}
quantifies the distance from the state $\rho$ to the set $\cal{S}$ of all separable states, where $S(\rho\|\sigma)$ is the quantum relative entropy. The regularized relative entropy of entanglement is given by
\begin{equation}
E_R^\infty(\rho)=\lim_{k\rightarrow+\infty}E_R(\rho^{\otimes k})/k.
\end{equation}
Squashed entanglement \cite{CW04} is defined as
\begin{equation}
E_{sq}(\rho_{AB})=\frac{1}{2}\inf_{\rho_{ABC}}\{S(\rho_{AC})+S(\rho_{BC})-S(\rho_C)-S(\rho_{ABC})\},
\label{sq}
\end{equation}
where the infimum is taken over all states $\rho_{ABC}$ in an extended Hilbert space satisfying $\rho_{AB}=\mathrm{tr}_C\rho_{ABC}$, and $\rho_{AC}=\mathrm{tr}_B\rho_{ABC}$, etc. Classical squashed entanglement $E_{sq}^C(\rho_{AB})$ is given by (\ref{sq}) where the infimum is taken with the additional restriction that $\rho_{AB|C}$ is quantum-classical (\ref{QC}) across the cut $AB|C$. Conditional entanglement of mutual information \cite{YHW08} is defined as
\begin{equation}
E_I(\rho_{AB})=\frac{1}{2}\inf_{\rho_{AA'BB'}}\{I(\rho_{AA'|BB'})-I(\rho_{A'B'})\},
\end{equation}
where the infimum is taken over all states $\rho_{AA'BB'}$ satisfying $\rho_{AB}=\mathrm{tr}_{A'B'}\rho_{AA'BB'}$, and
\begin{equation}
I(\rho_{A'B'})=S(\rho_{A'})+S(\rho_{B'})-S(\rho_{A'B'})
\end{equation}
is the quantum mutual information. A state $\rho_{X^{\otimes k}}$ with $X^{\otimes k}=\otimes_{i=1}^kX_i$ is a $k$-copy broadcast state of $\rho_X$ if $\rho_X=\mathrm{tr}_{X_1X_2\cdots X_{i-1}X_{i+1}\cdots X_k}\rho_{X^{\otimes k}}$ for any $i=1,2,\ldots,k$. Broadcast regularization of mutual information \cite{PCMH09} is given by
\begin{equation}
I_b^\infty(\rho_{AB})=\frac{1}{2}\lim_{k\rightarrow+\infty}\inf_{\rho_{A^{\otimes k}|B^{\otimes k}}}I(\rho_{A^{\otimes k}|B^{\otimes k}})/k,
\end{equation}
where the infimum is taken over all $k$-copy broadcast states of $\rho_{AB}$.

\begin{lemma}
(a) The definition (\ref{Ef}) of $E_F$ remains the same if the number of states in the ensemble is restricted to be less than or equal to $m^2n^2$ \cite{Nie00, Uhl98}, where $m\times n$ is the dimension of the bipartite state $\rho_{AB}$.

(b) $E_C(\rho)=E_F^\infty(\rho)=\lim_{k\rightarrow+\infty}E_F(\rho^{\otimes k})/k$ \cite{HHT01}.

(c) $E_F(\rho)\ge E_R(\rho)$ \cite{VP98}, $E_C(\rho)\ge E_{sq}(\rho)$ \cite{CW04}, $E_{sq}^C(\rho)\ge I_b^\infty(\rho)\ge E_I(\rho)\ge E_{sq}(\rho)$ \cite{PCMH09}.

(d) $E_R^\infty(\rho)\ge\inf_{\sigma\in\cal{S}}\|\rho-\sigma\|_1^2/(2mn\log2)$ \cite{Pia09}, $E_{sq}(\rho)\ge\inf_{\sigma\in\cal{S}}\|\rho-\sigma\|_2^2/(2448\log2)$ \cite{BCY11, BCY12, MWW09}, where $\|X\|_1=\mathrm{tr}\sqrt{X^\dag X}$ and $\|X\|_2=\sqrt{\mathrm{tr}X^\dag X}$ are the trace norm and the Frobenius norm, respectively.
\label{pre}
\end{lemma}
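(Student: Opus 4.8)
The plan is not to reprove the clauses of the lemma from scratch---each is an established fact with the citation given in the statement---but to record the provenance of each and to recall, in a line apiece, the mechanism that makes it true; the write-up is then essentially a sequence of citations with a short gloss, since nothing in the sequel uses more than the stated forms.

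For part (a) the tool is Carath\'eodory's theorem. An ensemble $\{p_i,|\psi_i\rangle\}$ with $\rho_{AB}=\sum_ip_i|\psi_i\rangle\langle\psi_i|$ together with its cost $\sum_ip_iS(\rho_A^i)$ determines a point in the convex hull of $\{(|\psi\rangle\langle\psi|,\,S(\mathrm{tr}_B|\psi\rangle\langle\psi|))\}$, which lies in a real affine space of dimension $(mn)^2$ (trace-one Hermitian matrices contribute $(mn)^2-1$ real parameters, plus the entropy coordinate). The infimum of the last coordinate over the fibre above $\rho_{AB}$ is attained at an extreme point of that fibre, which by Carath\'eodory is a convex combination of at most $m^2n^2$ of the generators; this is exactly the Uhlmann--Nielsen bound, so here I would simply invoke \cite{Nie00,Uhl98}. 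For part (b) I would cite the Hayden--Horodecki--Terhal regularization theorem \cite{HHT01} directly, its proof being an asymptotic LOCC argument that is not needed in detail.

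For part (c) each inequality is a convexity/monotonicity statement from the indicated reference, and I would reproduce it only to the extent of naming the monotone and the data-processing step: $E_F\ge E_R$ because $E_R$ is a convex entanglement monotone agreeing with the entanglement entropy on pure states \cite{VP98}; $E_C\ge E_{sq}$ from \cite{CW04}; and the chain $E_{sq}^C\ge I_b^\infty\ge E_I\ge E_{sq}$ from the broadcast/superadditivity analysis of \cite{PCMH09}. Part (d) contains the only genuinely deep inputs: the Pinsker-type estimate $E_R^\infty(\rho)\ge\inf_{\sigma\in\mathcal{S}}\|\rho-\sigma\|_1^2/(2mn\log2)$ of \cite{Pia09}, and the faithfulness bound $E_{sq}(\rho)\ge\inf_{\sigma\in\mathcal{S}}\|\rho-\sigma\|_2^2/(2448\log2)$ of \cite{BCY11,BCY12,MWW09}, whose proof rests on a quantum de Finetti / conditional-mutual-information argument. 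I would take both as black boxes.

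The main ``obstacle'' is therefore not mathematical but a matter of bookkeeping and normalization: I must quote the dimension factor in the first bound of (d) and the absolute constant in the second in precisely the form that the later reductions will consume, so that a gap between ``$\rho$ is separable'' and ``$\rho$ is a fixed distance from separable'' turns into a constant-size gap in the value of the entanglement measure. Checking that the pure-state-ensemble size bound in (a) is polynomial in $mn$---which it is, being $m^2n^2$---is the other point to verify carefully, since it is what later makes the optimization in $E_F$ a polynomially-sized program.
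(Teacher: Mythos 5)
The paper gives no proof of this lemma at all---it is a collection of known results, each justified solely by the inline citations---so your citation-plus-gloss treatment is exactly the paper's approach, and your Carath\'eodory count for part (a) correctly reproduces the mechanism behind the Uhlmann--Nielsen bound. Nothing is missing.
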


Accounting for the finite precision of numerical computing, hereafter, every real number is assumed to be represented by a polynomial number of bits, and the formulation of each computational problem is approximate. Indeed, we will prove that the problems are computationally intractable even if small errors are allowed. We begin by recalling the following lemma.

\begin{lemma}[NP-completeness of the separability problem]
Given a bipartite quantum state $\rho$ of dimension $m\times n$ with the promise that either (Y) $\rho\in\cal{S}$ or (N) $\inf_{\sigma\in\cal{S}}\|\rho-\sigma\|_2\ge\delta$, it is NP-complete to decide which is the case, where $\delta=1/\mathrm{poly}(m,n)$ is some inverse polynomial in $m,n$.
\label{Sep}
\end{lemma}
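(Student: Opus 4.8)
The plan is to establish the two halves of the claim separately: membership in NP for the promise problem, then NP-hardness by a gap-preserving reduction from an NP-complete problem that already carries an inverse-polynomial objective gap.

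For membership in NP I would use an explicit separable decomposition as the certificate in the YES case. The real vector space of Hermitian operators on an $m\times n$-dimensional bipartite system has dimension $m^2n^2$, so by Carath\'eodory's theorem every $\rho\in\cal S$ can be written as $\rho=\sum_{i=1}^{m^2n^2}p_i|a_i\rangle\langle a_i|\otimes|b_i\rangle\langle b_i|$ with at most $m^2n^2$ pure product states (cf.\ the analogous bound in lemma \ref{pre}(a)). Rounding the $p_i$, $|a_i\rangle$, $|b_i\rangle$ to $\mathrm{poly}(m,n)$ bits gives a polynomial-size witness, and the verifier only needs to confirm that the $p_i$ are approximately nonnegative with sum $\approx1$ and that $\|\rho-\sum_ip_i|a_i\rangle\langle a_i|\otimes|b_i\rangle\langle b_i|\|_2$ is small — a routine calculation. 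In the NO case no convex combination of product states comes within $\delta$ of $\rho$, so no witness passes; the inverse-polynomial promise gap is precisely what makes this finite-precision argument sound.

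For NP-hardness I would reduce from CLIQUE in the form: decide whether a graph $G$ on $t$ vertices has clique number $\ge k$ or $\le k-1$. This is NP-complete, and by the Motzkin--Straus identity $\max_{x\in\Delta_t}x^\top A_G x=1-1/\omega(G)$ (with $\Delta_t$ the probability simplex and $A_G$ the adjacency matrix), the two cases are separated by a gap of at least $1/t^2$ in the value of this quadratic form. The core of the reduction is the dictionary — the device underlying Gurvits' argument and its later refinements — between the maximum of a (bi-)quadratic form over product states and the value, optimized over $\cal S$, of a linear functional given by an entanglement-witness operator $W_G$ built from $A_G$; here one uses that the extreme points of $\cal S$ are pure product states, and convex duality then converts this value into a bound on the Frobenius distance from a suitably normalized operator $\rho_G$ to $\cal S$. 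I would design $\rho_G$ so that $\rho_G\in\cal S$ exactly in the large-clique case, while in the small-clique case the Motzkin--Straus gap forces $\inf_{\sigma\in\cal S}\|\rho_G-\sigma\|_2\ge\delta$ for some $\delta=1/\mathrm{poly}(t)$.

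The main obstacle is the robustness of this last implication: the NO instances must be \emph{quantitatively} far from $\cal S$, not merely outside it. This calls for a stability estimate showing that if $\rho_G$ were within $\epsilon$ of some $\sigma\in\cal S$, then a pinching/symmetrization step replacing $\sigma$ by a nearby separable state of the special form $\sum_jq_j|v_j\rangle\langle v_j|\otimes|v_j\rangle\langle v_j|$ would let one read off a point $x\in\Delta_t$ with $x^\top A_Gx\ge(1-1/k)-O(\mathrm{poly}(t)\,\epsilon)$, contradicting $\omega(G)\le k-1$ once $\epsilon<\delta$. Making each ``$O(\mathrm{poly}(t))$'' explicit — in particular bounding how the symmetrization inflates the error and reconciling the bi-quadratic form with the single-vector Motzkin--Straus form — is where the delicate norm estimates lie, and this quantitative strengthening of Gurvits' original (gapless) NP-hardness is exactly what \cite{Gha10, Ioa07} supply. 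Combining it with membership in NP yields NP-completeness.
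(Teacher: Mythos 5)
There is nothing in the paper to compare your argument against: lemma \ref{Sep} is imported from the literature, not proved --- the remark immediately following it points to \cite{Ioa07} for NP-completeness with exponentially small $\delta$ and to \cite{Gha10} for NP-hardness with $\delta=1/\mathrm{poly}(m,n)$, and the paper combines the two without reproducing either argument. Your sketch is a faithful reconstruction of what those references actually do: the NP certificate via a Carath\'eodory decomposition into at most $m^2n^2$ pure product states with $\mathrm{poly}(m,n)$-bit entries (the same bound the paper cites from \cite{Hor97}), and the hardness chain CLIQUE $\rightarrow$ Motzkin--Straus gap $\rightarrow$ inverse-polynomial-accuracy linear optimization over $\cal{S}$ $\rightarrow$ weak membership for $\cal{S}$ via the Yudin--Nemirovskii/GLS equivalences. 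You have also correctly located the two genuinely delicate points: (i) a symmetric quadratic form maximized over one copy of the simplex is \emph{not} in general equal to its bipartite relaxation over two independent arguments (already for a single edge, $\max_{x\in\Delta}x^\top A_Gx=1/2$ while $\max_{p,q\in\Delta}p^\top A_Gq=1$), so the reduction must symmetrize the witness or restrict to the symmetric subspace with explicit error control; and (ii) converting hardness of optimization into hardness of membership while keeping $\delta=1/\mathrm{poly}(m,n)$, which is precisely the ``strong'' refinement of Gurvits' original gapless result supplied by \cite{Gha10}. Since you defer exactly those steps to \cite{Gha10,Ioa07} --- the same references the paper defers the entire lemma to --- your proposal is an acceptable, and in fact more informative, account than the paper's own treatment; it is a proof plan rather than a complete proof, but the missing details are genuinely the content of the cited works rather than an oversight.
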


\begin{remark}
The NP-completeness of the separability problem with $\delta=\exp(-O(m,n))$ is proven in \cite{Ioa07}, and the NP-hardness of the separability problem with $\delta=1/\mathrm{poly}(m,n)$ is proven in \cite{Gha10}. The separability problem can be solved in $\exp(O((\log m)(\log n)/\delta^2))$ time (a quasi-polynomial-time algorithm for $\delta=1/\mathrm{poly}(\log m,\log n)$) \cite{BCY11a}.
\end{remark}

\begin{theorem}[NP-hardness/NP-completeness of computing entanglement measures]
Given a bipartite quantum state $\rho$ of dimension $m\times n$ and a real number $a$ with the promise that either (Y) $E_F(\rho)\le a$ or (N) $E_F(\rho)\ge a+\epsilon$, it is NP-complete to decide which is the case, where $\epsilon=1/\mathrm{poly}(m,n)$. In the same sense, computing $E_R$ is NP-complete and computing $E_C, E_R^\infty, E_{sq}, E_{sq}^C, E_I, I_b^\infty$ is NP-hard.
\label{EM}
\end{theorem}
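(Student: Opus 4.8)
The plan is to reduce the separability problem (Lemma~\ref{Sep}) to the problem of computing $E_F$, and then to propagate this hardness to the other measures using the inequalities collected in Lemma~\ref{pre}. The starting observation is that $E_F(\rho)=0$ if and only if $\rho$ is separable, and more quantitatively $E_F$ is small exactly when $\rho$ is close to the separable set. So given an instance $\rho$ of the promise problem in Lemma~\ref{Sep}, I would set $a=0$ and a suitable $\epsilon=1/\mathrm{poly}(m,n)$, and argue: in the (Y) case $\rho\in\cal{S}$ so $E_F(\rho)=0\le a$; in the (N) case $\inf_{\sigma\in\cal S}\|\rho-\sigma\|_2\ge\delta$, and I must turn this trace/Frobenius-distance lower bound into a lower bound $E_F(\rho)\ge\epsilon$. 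For the (N)-case bound I would use a continuity/lower-semicontinuity estimate for $E_F$ against distance to $\cal S$: either invoke that $E_R\le E_F$ (Lemma~\ref{pre}(c)) together with a Pinsker-type bound $E_R(\rho)\ge\inf_{\sigma\in\cal S}\|\rho-\sigma\|_1^2/(2\ln 2)\ge\inf_\sigma\|\rho-\sigma\|_2^2/(2\ln2)$, or directly use a known modulus-of-continuity bound for $E_F$ near the separable set. This gives $E_F(\rho)\ge\delta^2/(2\ln2)=1/\mathrm{poly}(m,n)$, which is the required $\epsilon$. Since $E_F$ is computable (the infimum in (\ref{Ef}) is attained on ensembles of size $\le m^2n^2$ by Lemma~\ref{pre}(a), so the problem is in NP via a short witness), together with the reduction this shows computing $E_F$ is NP-complete.

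For $E_R$ the argument is the same with $a=0$: the (Y) case gives $E_R(\rho)=0$, and in the (N) case the Pinsker bound of Lemma~\ref{pre}(d)-style reasoning gives $E_R(\rho)\ge\inf_{\sigma\in\cal S}\|\rho-\sigma\|_1^2/(2\ln2)\ge\delta^2/(2\ln2)$; membership in NP follows since $E_R$ is a convex optimization with a polynomially describable feasible set, so computing $E_R$ is NP-complete as well. For the regularized and squashed-type quantities I only claim NP-hardness, because regularizations and the infimum over unbounded extension spaces are not obviously in NP. Here I use Lemma~\ref{pre} directly: in the (Y) case every one of $E_C,E_R^\infty,E_{sq},E_{sq}^C,E_I,I_b^\infty$ vanishes (they all vanish on separable states, e.g. $E_C=E_F^\infty$ by (b), $E_R^\infty\le E_R$, and $E_{sq}^C\ge I_b^\infty\ge E_I\ge E_{sq}$ with $E_C\ge E_{sq}$ all sandwich the separable case to $0$). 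In the (N) case I need each of them bounded below by an inverse polynomial: for $E_C$ use $E_C=E_F^\infty\ge E_F\ge\epsilon$ (superadditivity gives $E_F(\rho^{\otimes k})\ge kE_F(\rho)$, hence the limit is $\ge E_F(\rho)$); for $E_R^\infty$ use Lemma~\ref{pre}(d); for $E_{sq}$ use the Frobenius-norm lower bound in Lemma~\ref{pre}(d); and for $E_{sq}^C,E_I,I_b^\infty$ use the chain $\ge E_{sq}$ from Lemma~\ref{pre}(c). Thus the same reduction, with $a=0$ and $\epsilon$ the appropriate inverse polynomial (the smallest of the ones produced above), proves NP-hardness for all of them.

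The main obstacle I anticipate is the quantitative (N)-case bound for $E_F$ itself — turning "far from separable in norm" into "$E_F$ bounded below by $1/\mathrm{poly}$" cleanly. The cleanest route is to route through $E_R$ (using $E_F\ge E_R$ from Lemma~\ref{pre}(c) and Pinsker), which sidesteps the need for an explicit continuity bound on $E_F$, but one must check the polynomial degrees match so that a single $\epsilon=1/\mathrm{poly}(m,n)$ works for the promise. A secondary point to be careful about is the approximate/finite-precision formulation: the reduction must tolerate the polynomially-many-bits truncation of $\rho$ and of $a$, which is fine since all the inequalities used are robust to $1/\mathrm{poly}$ perturbations and $\epsilon$ can be chosen a constant factor smaller than $\delta^2/(2\ln2)$ to absorb rounding. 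Finally, for the NP-membership claims (for $E_F$ and $E_R$) I should state explicitly the short certificate — an optimal ensemble of size $\le m^2n^2$ for $E_F$, and for $E_R$ the near-optimal $\sigma\in\cal S$ described by a convex combination of $\mathrm{poly}(m,n)$ product states (Carathéodory) — together with the fact that the objective is efficiently computable to the required precision, so verification runs in polynomial time.
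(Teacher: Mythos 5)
Your overall strategy is exactly the paper's: reduce from the separability problem (lemma \ref{Sep}) with $a=0$, note that all the measures vanish on separable states, and in the (N) case convert ``far from $\cal{S}$ in norm'' into an inverse-polynomial lower bound via the ordering relations in lemma \ref{pre}(c) and the faithfulness bounds in lemma \ref{pre}(d). Your routes for $E_F$ and $E_R$ (Pinsker for $E_R$, then $E_F\ge E_R$), for $E_R^\infty$, for $E_{sq}$, and for $E_{sq}^C, E_I, I_b^\infty$ (via the chain down to $E_{sq}$) all match the paper in substance, as do the NP-membership certificates for $E_F$ and $E_R$ and the decision to claim only NP-hardness for the regularized quantities.

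There is, however, one genuine error: your treatment of $E_C$. You write $E_C=E_F^\infty\ge E_F$, justified by ``superadditivity gives $E_F(\rho^{\otimes k})\ge kE_F(\rho)$.'' This inequality goes the wrong way. $E_F$ is \emph{sub}additive (take product decompositions of $\rho^{\otimes k}$), so $E_F(\rho^{\otimes k})\le kE_F(\rho)$ and hence $E_C\le E_F$ always; moreover, superadditivity/additivity of $E_F$ is not merely unproven but known to fail in general (Hastings' counterexample, via Shor's equivalences). So $E_C\ge E_F\ge\epsilon$ is not a valid step, and a priori $E_C$ could be much smaller than $E_F$. The fix is already implicit in the rest of your argument and is what the paper does: use $E_C\ge E_{sq}$ from lemma \ref{pre}(c) together with the faithfulness bound $E_{sq}(\rho)\ge\inf_{\sigma\in\cal{S}}\|\rho-\sigma\|_2^2/(2448\log2)\ge\delta^2/(2448\log2)$ from lemma \ref{pre}(d). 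With that substitution (and choosing the single $\epsilon=\delta^2/(2448mn\log2)$ so that it works simultaneously for all the measures, as you anticipate), your proof is complete and coincides with the paper's.
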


\begin{proof}
Computing $E_F,E_R$ is in NP: the certificates of the yes instances (Y) are the optimal ensemble of pure states $\{p_i,|\psi_i\rangle\}$ and the closest separable state $\sigma$, respectively. The NP-hardness of computing entanglement measures is totally expected, as computing entanglement measures is more difficult than just detecting entanglement. Indeed, the hardness proof is a reduction from lemma \ref{Sep}. Set $a=0$ and $\epsilon=\delta^2/(2448mn\log2)=1/\mathrm{poly}(m,n)$. (Y) If $\rho$ is separable, then
\begin{equation}
E_C(\rho)=E_F(\rho)=E_R(\rho)=E_R^\infty(\rho)=E_{sq}(\rho)=E_{sq}^C(\rho)=E_I(\rho)=I_b^\infty(\rho)=0.
\end{equation}
(N) If $\inf_{\sigma\in\cal{S}}\|\rho-\sigma\|_2\ge\delta$, then
\begin{eqnarray}
&&E_F(\rho)\ge E_R(\rho)\ge E_R^\infty(\rho),~E_F(\rho)\ge E_C(\rho)\ge E_{sq}(\rho),~E_{sq}^C(\rho)\ge I_b^\infty(\rho)\ge E_I(\rho)\ge E_{sq}(\rho),\\
&&E_R^\infty(\rho)\ge\inf_{\sigma\in\cal{S}}\|\rho-\sigma\|_1^2/(2mn\log2)\ge\inf_{\sigma\in\cal{S}}\|\rho-\sigma\|_2^2/(2mn\log2)\ge\delta^2/(2mn\log2)\ge\epsilon,\\
&&E_{sq}(\rho)\ge\inf_{\sigma\in\cal{S}}\|\rho-\sigma\|_2^2/(2448\log2)\ge\delta^2/(2448\log2)\ge\epsilon.
\end{eqnarray}
\end{proof}

\begin{remark}
The computational problem in theorem \ref{EM} requires a guess of $E_F(\rho)$ as an input. This formulation is reasonable: if there is an efficient subroutine for the problem, a binary search for $E_F(\rho)$ can be done by calling the subroutine $O(\log(\log(mn)/\epsilon))=O(\log m,\log n)$ times to achieve the precision $\epsilon=1/\mathrm{poly}(m,n)$. The hardness proof does not apply to $E_D$, as $E_D(\rho)$ can be zero for an entangled state $\rho$. It is an open problem whether computing $E_C, E_R^\infty, E_{sq}, E_{sq}^C, E_I, I_b^\infty$ is in NP. For instance, it is not clear how large the dimension of $\rho_{ABC}$ should be so that the right-hand side of (\ref{sq}) is optimal (or sufficiently close to optimal).

\end{remark}

\section{NP-completeness of computing quantum discord}
\label{QD}

As a measure of quantum correlation (beyond entanglement), quantum discord \cite{OZ01}
\begin{equation}
D(\rho_{AB}|B)=I(\rho_{AB})-J(\rho_{AB}|B)
\end{equation}
is the difference between total correlation (quantified by quantum mutual information) and classical correlation \cite{HV01}
\begin{equation}
J(\rho_{AB}|B)=S(\rho_A)-\inf_{\{\Pi_i\}}\sum_ip_iS(\rho_A^i),
\end{equation}
where $\{\Pi_i\}$ is a measurement on the subsystem $B$; $p_i=\mathrm{tr}(\rho_{AB}\Pi_i)$ is the probability of the $i$th measurement outcome; $\rho_A^i=\mathrm{tr}_B(\rho_{AB}\Pi_i)/p_i$ is the post-measurement state. The infimum is taken over either all von Neumann measurements or all generalized measurements described by positive-operator valued measures (POVM); the corresponding notations are $J_N,D_N$ and $J_P,D_P$, respectively. See \cite{NC11} for an introduction to von Neumann measurements and POVM measurements. The definitions of $J_P,D_P$ remain the same if the number of operators in the POVM is restricted to be less than or equal to $n^2$, where $n$ is the dimension of the subsystem $B$. This is because the optimal POVM must be extremal \cite{HKZ04}, and an extremal POVM contains at most $n^2$ operators \cite{DPP05}. Regularized classical correlation and quantum discord are given by
\begin{equation}
J^\infty(\rho_{AB}|B)=\lim_{k\rightarrow+\infty}J(\rho_{AB}^{\otimes k}|B^{\otimes k})/k,~D^\infty(\rho_{AB}|B)=I(\rho_{AB})-J^\infty(\rho_{AB}|B)=\lim_{k\rightarrow+\infty}D(\rho_{AB}^{\otimes k}|B^{\otimes k})/k.
\end{equation}

\begin{theorem}[NP-completeness of computing quantum discord]
Given a bipartite quantum state $\rho_{AB}$ of dimension $m\times n$ and a real number $b$ with the promise that either (Y) $D_P(\rho_{AB}|B)\le b$ or (N) $D_P(\rho_{AB}|B)\ge b+\epsilon$, it is NP-complete to decide which is the case, where $\epsilon=1/\mathrm{poly}(m,n)$. In the same sense, computing $D_N,J_{N,P}$ is NP-complete and computing $D_{N,P}^\infty,J_{N,P}^\infty$ is NP-hard.
\label{main}
\end{theorem}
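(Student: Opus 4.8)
The plan is a reduction from the NP-complete problem of computing $E_F$ (and the NP-hard problem of computing $E_C$) of Theorem \ref{EM}, with the Koashi--Winter identity relating classical correlation to entanglement of formation as the engine. NP membership of $D_P, D_N, J_P, J_N$ is the easy half: $I(\rho_{AB})$ is computable in polynomial time (just eigenvalues) and $D_{N,P} = I - J_{N,P}$, so a yes instance of the $D_P$-problem is certified by a POVM on $B$ achieving $S(\rho_A) - \sum_i p_i S(\rho_A^i) \ge I(\rho_{AB}) - b$, and the optimal such POVM may (as noted above) be taken with at most $n^2$ elements; for $J_N, D_N$ the certificate is a von Neumann measurement on $B$, i.e.\ an orthonormal basis of $B$. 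In each case verification is a direct entropy computation, so all four quantities lie in NP.

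For the hardness reduction, take an instance $(\sigma_{XY}, a)$ of the $E_F$-problem with $\sigma_{XY}$ of dimension $m \times n$; I would compute (to polynomial bit-precision, as stipulated) a purification $|\psi\rangle_{XYZ}$ with $\dim Z \le mn$, put $\rho_{XZ} = \mathrm{tr}_Y |\psi\rangle\langle\psi|$, and invoke the Koashi--Winter relation $E_F(\sigma_{XY}) = S(\sigma_X) - J_P(\rho_{XZ}|Z)$. This relation comes from the Hughston--Jozsa--Wootters correspondence between pure-state ensembles of $\sigma_{XY}$ and rank-one POVMs on $Z$, together with concavity of the von Neumann entropy (refining a POVM into rank-one pieces only decreases $\sum_i p_i S(\rho_X^i)$, so the infimum over all POVMs on $Z$ equals that over pure-state ensembles of $\sigma_{XY}$); in particular the POVM quantity $J_P$, not $J_N$, is what appears. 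Rewriting $I(\rho_{XZ}) = S(\sigma_X) + S(\sigma_{XY}) - S(\sigma_Y)$ using purity of $|\psi\rangle_{XYZ}$ gives the identity
\begin{equation}
D_P(\rho_{XZ}|Z) = E_F(\sigma_{XY}) + S(\sigma_{XY}) - S(\sigma_Y),
\end{equation}
whose additive corrections are efficiently computable; hence with $b = a + S(\sigma_{XY}) - S(\sigma_Y)$ the promise $E_F(\sigma_{XY}) \le a$ versus $E_F(\sigma_{XY}) \ge a + \epsilon$ becomes exactly $D_P(\rho_{XZ}|Z) \le b$ versus $D_P(\rho_{XZ}|Z) \ge b + \epsilon$, with the same $\epsilon = 1/\mathrm{poly}(m,n)$, which is inverse-polynomial in the new dimension $m \times mn$ as well; since $J_P = I - D_P$, the same state reduces the $E_F$-problem to computing $J_P$. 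Running the identical computation on $|\psi\rangle_{XYZ}^{\otimes k}$ (a purification of $\sigma_{XY}^{\otimes k}$ with $\rho_{X^kZ^k} = \rho_{XZ}^{\otimes k}$), dividing by $k$, letting $k \to \infty$, and using $E_C = E_F^\infty$ from Lemma \ref{pre}(b) gives $D_P^\infty(\rho_{XZ}|Z) = E_C(\sigma_{XY}) + S(\sigma_{XY}) - S(\sigma_Y)$ and $J_P^\infty(\rho_{XZ}|Z) = S(\sigma_X) - E_C(\sigma_{XY})$, so computing $D_P^\infty, J_P^\infty$ is NP-hard.

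It remains to transfer everything to von Neumann measurements, and the point is that enlarging the measured system by a fixed pure ancilla erases the difference: by Naimark's dilation, every POVM on $Z$ with at most $(\dim Z)^2$ outcomes --- in particular the optimizer --- is implemented by a projective measurement on $Z \otimes Z'$ with $Z'$ prepared in a fixed pure state $|0\rangle$ and $\dim Z' = (\dim Z)^2$, inducing the same outcome probabilities and post-measurement states on $X$, while conversely any projective measurement on $ZZ'$ with $Z'$ in $|0\rangle$ reduces to a POVM on $Z$; hence $J_N(\rho_{XZ} \otimes |0\rangle\langle 0|_{Z'} \,|\, ZZ') = J_P(\rho_{XZ}|Z)$ and similarly $D_N(\rho_{XZ} \otimes |0\rangle\langle 0|_{Z'} \,|\, ZZ') = D_P(\rho_{XZ}|Z)$ (tensoring a pure ancilla changes neither $I$ nor the optimized value), so the state $\rho_{XZ} \otimes |0\rangle\langle 0|_{Z'}$ gives NP-completeness of $D_N, J_N$, and the same extension applied copy-by-copy to $\sigma_{XY}^{\otimes k}$ gives NP-hardness of $D_N^\infty, J_N^\infty$. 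The routine parts are the NP-membership certificates and the entropy bookkeeping; I expect the crux to be making the Koashi--Winter step precise --- getting the ensemble$\leftrightarrow$POVM correspondence and its dimension bounds exactly right, and recognizing that it is $J_P$ rather than $J_N$ that enters --- and then the Naimark-dilation argument that carries the result to the von Neumann setting with an ancilla of bounded size.
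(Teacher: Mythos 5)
Your proposal is correct, and the main line of the reduction --- NP membership via the optimal measurement as certificate, then a polynomial-time reduction from the $E_F$/$E_C$ problem of theorem \ref{EM} through the Koashi--Winter identity $E_F(\sigma_{XY})=S(\sigma_X)-J_P(\rho_{XZ}|Z)$, with the additive entropy corrections absorbed into the threshold $b$ --- is exactly the paper's. Where you genuinely diverge is in transferring the result from POVMs to von Neumann measurements, which is the one technically novel step here (the Koashi--Winter relation as such only concerns $J_P$). The paper handles it by \emph{not} taking the minimal purification: it purifies into a system $C$ of dimension $m^2n^2$, so that, by lemma \ref{pre}(a), the optimal ensemble has at most $m^2n^2$ members and, by Hughston--Jozsa--Wootters, every such ensemble is already realized by a von Neumann measurement directly on $C$; hence $D_N=D_P$ for the constructed state $\rho_{BC}$ and a single instance serves both cases. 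You instead keep $\dim Z\le mn$ and append a pure ancilla $Z'$ of dimension $(\dim Z)^2$, using Naimark dilation to show $D_N(\rho_{XZ}\otimes|0\rangle\langle0|_{Z'}\,|\,ZZ')=D_P(\rho_{XZ}|Z)$; this is valid --- the dilated projective measurement reproduces the outcome probabilities and post-measurement states on $X$, the converse restriction of any projective measurement on $ZZ'$ to an effective POVM on $Z$ gives the reverse inequality, and tensoring a pure product ancilla changes neither $I$ nor $S(\rho_X)$ --- and the instance dimension $m\times(mn)^3$ remains polynomial. The two devices buy essentially the same thing; the paper's is marginally leaner (one state, no ancilla, smaller dimension), while yours makes the POVM-versus-projective issue more explicitly modular and would apply verbatim to any state, not just one arising from a purification. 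Your treatment of the regularized quantities and of the $\le(\dim Z)^2$ bound on extremal POVMs matches the paper's.
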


\begin{proof}
Computing $D_{N,P}$ is in NP: the certificates of (Y) are the optimal measurements $\{\Pi_i\}$ on the subsystem $B$. The hardness proof is basically a reduction from theorem \ref{EM} via the Koashi-Winter relation \cite{KW04} between $E_F$ and $D_P$, and technically we derive a similar relation between $E_F$ and $D_N$ (note that the Koashi-Winter relation is between $E_F$ and $D_P$ rather than between $E_F$ and $D_N$). Given a bipartite state $\rho_{AB}$ of dimension $m\times n$, by diagonalizing $\rho_{AB}$ we construct a tripartite pure state $|\Psi_{ABC}\rangle$ of dimension $m\times n\times m^2n^2$ satisfying $\rho_{AB}=\mathrm{tr}_C|\Psi_{ABC}\rangle\langle\Psi_{ABC}|$ (note that such a tripartite pure state of dimension $m\times n\times mn$ exists, but a larger dimension of the subsystem $C$ will be useful later). (i) A POVM measurement $\{\Pi_i\}$ on $C$ produces an ensemble $\{p_i,\rho_i\}$ satisfying $\rho_{AB}=\sum_ip_i\rho_i$, where $p_i=\mathrm{tr}(|\Psi_{ABC}\rangle\langle\Psi_{ABC}|\Pi_i)$ and $\rho_i=\mathrm{tr}_{C}(|\Psi_{ABC}\rangle\langle\Psi_{ABC}|\Pi_i)/p_i$. (ii) For any ensemble $\{p_i,\rho_i\}$ satisfying $\rho_{AB}=\sum_ip_i\rho_i$, a POVM measurement $\{\Pi_i\}$ exists on $C$ such that $p_i=\mathrm{tr}(|\Psi_{ABC}\rangle\langle\Psi_{ABC}|\Pi_i)$ and $\rho_i=\mathrm{tr}_C(|\Psi_{ABC}\rangle\langle\Psi_{ABC}|\Pi_i)/p_i$; moreover, such a von Neumann measurement on $C$ exists if the dimension of $C$ is greater than or equal to the number of states in the ensemble \cite{HJW93} (this condition is satisfied due to lemma \ref{pre}(a)). As the definition (\ref{Ef}) of $E_F$ remains the same if the infimum is taken over all ensembles of possibly mixed states $\{p_i,\rho_i\}$ satisfying $\rho_{AB}=\sum_ip_i\rho_i$, the relation
\begin{equation}
E_F(\rho_{AB})=D_{N,P}(\rho_{BC}|C)+S(\rho_A)-S(\rho_{AB})
\end{equation}
follows immediately from the definitions of $E_F$ and $D_{N,P}$ (note that in the present case $D_N=D_P$, though generically $D_N\neq D_P$). Set $b=a-S(\rho_A)+S(\rho_{AB})$. We complete the reduction from $E_F$ to $D_{N,P}$ by taking $\rho_{BC}$ as the input to the computational problem in theorem \ref{main}. The regularized relation
\begin{equation}
E_C(\rho_{AB})=D_{N,P}^\infty(\rho_{BC}|C)+S(\rho_A)-S(\rho_{AB}).
\end{equation}
implies a reduction from $E_C$ to $D_{N,P}^\infty$. These reductions are polynomial-time reductions.
\end{proof}

\section{NP-completeness of computing constrained Holevo capacity}

A quantum channel $\Phi$ is a completely positive trace-preserving linear map \cite{NC11} from states of dimension $n_i$ to states of dimension $n_o$. The constrained Holevo capacity \cite{Sho04} is defined as
\begin{equation}
\chi_\Phi(\rho)=S(\Phi(\rho))-\inf_{\{p_i,|\psi_i\rangle\}}\sum_ip_iS(\Phi(|\psi_i\rangle\langle\psi_i|)),
\label{constrain}
\end{equation}
where the infimum is taken over all ensembles of pure states $\{p_i,|\psi_i\rangle\}$ satisfying $\rho=\sum_ip_i|\psi_i\rangle\langle\psi_i|$. The definition (\ref{constrain}) of $\chi_\Phi(\rho)$ remains the same if the number of states in the ensemble is restricted to be less than or equal to $n_i^2$ \cite{NC11}. The regularized constrained Holevo capacity is given by
\begin{equation}
\chi_\Phi^\infty(\rho)=\lim_{k\rightarrow+\infty}\chi_{\Phi^{\otimes k}}(\rho^{\otimes k})/k.
\end{equation}
The Holevo capacity $\chi_\Phi=\sup_{\rho}\chi_\Phi(\rho)$ is the maximum rate at which classical information can be transmitted through the quantum channel $\Phi$ using product states as codewords \cite{Hol98, SW97}. The regularized Holevo capacity is given by
\begin{equation}
\chi_\Phi^\infty=\lim_{k\rightarrow+\infty}\chi_{\Phi^{\otimes k}}/k\neq\sup_\rho\chi_\Phi^\infty(\rho).
\end{equation}

\begin{corollary}[NP-completeness of computing constrained Holevo capacity]
Given a quantum channel $\Phi$, a quantum state $\rho$ of dimension $n_i$, and a real number $c$ with the promise that either (Y) $\chi_\Phi(\rho)\ge c$ or (N) $\chi_\Phi(\rho)\le c-\epsilon$, it is NP-complete to decide which is the case, where $\epsilon=1/\mathrm{poly}(n_i,n_o)$. In the same sense, computing $\chi_\Phi^\infty(\rho)$ is NP-hard.
\label{cHolevo}
\end{corollary}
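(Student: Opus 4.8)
The plan is to reduce from the NP-complete problem of computing $E_F$ (theorem \ref{EM}), via the standard identity $\chi_\Phi(\rho)=S(\Phi(\rho))-E_F(V\rho V^\dagger)$, in which $V:\mathcal{H}_{in}\to\mathcal{H}_{out}\otimes\mathcal{H}_{env}$ is a Stinespring isometry of $\Phi$ and $V\rho V^\dagger$ is read as a bipartite state on the output system versus the environment. Given an instance $(\rho_{AB},a)$ of the $E_F$ problem with $\rho_{AB}$ of dimension $m\times n$, I would build a channel whose Stinespring picture realizes $V\rho V^\dagger=\rho_{AB}$: diagonalize $\rho_{AB}=\sum_k\lambda_k|f_k\rangle\langle f_k|$, complete $\{|f_k\rangle\}$ to an orthonormal basis of $\mathcal{H}_A\otimes\mathcal{H}_B$, fix an orthonormal basis $\{|e_k\rangle\}$ of an $mn$-dimensional space $\mathcal{H}_E$, and set $V|e_k\rangle=|f_k\rangle$ (a unitary). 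With $\rho:=\rho_E=\sum_k\lambda_k|e_k\rangle\langle e_k|$ and $\Phi(X)=\mathrm{tr}_B(VXV^\dagger)$ a channel from $\mathcal{H}_E$ (so $n_i=mn$) to $\mathcal{H}_A$ (so $n_o=m$), whose $n$ Kraus operators $\{(\langle b|\otimes I_A)V\}$ are read off from $V$, one gets $V\rho V^\dagger=\rho_{AB}$ and $\Phi(\rho)=\mathrm{tr}_B\rho_{AB}=\rho_A$; the whole construction is polynomial-time in the input.

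To verify the identity in this instance: $\Phi(\rho_E)=\rho_A$ supplies the term $S(\rho_A)$ in (\ref{constrain}), and for any pure-state ensemble $\{p_i,|\psi_i\rangle\}$ of $\rho_E$ the vectors $V|\psi_i\rangle$ form a pure-state ensemble of $\rho_{AB}=V\rho_E V^\dagger$, with $S(\Phi(|\psi_i\rangle\langle\psi_i|))$ equal to the entanglement entropy of $V|\psi_i\rangle$ (the two marginals of a pure bipartite state have equal entropy --- the symmetry encoded by the complementary channel $\Phi^c(X)=\mathrm{tr}_A(VXV^\dagger)$). Since every pure state in a decomposition of $\rho_E$ lies in $\mathrm{supp}(\rho_E)$ and every pure state in a decomposition of $\rho_{AB}$ lies in $\mathrm{supp}(\rho_{AB})=V(\mathrm{supp}(\rho_E))$, the assignment $|\psi_i\rangle\mapsto V|\psi_i\rangle$ is a bijection between pure-state ensembles of $\rho_E$ and of $\rho_{AB}$ --- with inverse $V^\dagger$ --- that preserves the number of terms and the objective value, so it identifies the minimization in (\ref{constrain}) with that in (\ref{Ef}), the cap $n_i^2=m^2n^2$ matching the bound of lemma \ref{pre}(a). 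Hence $\chi_\Phi(\rho_E)=S(\rho_A)-E_F(\rho_{AB})$. Choosing $c=\tilde S-a-\epsilon/3$, where $\tilde S$ approximates $S(\rho_A)$ to within $\epsilon/3$ (a polynomial-time computation), a (Y) instance $E_F(\rho_{AB})\le a$ becomes $\chi_\Phi(\rho_E)\ge c$ and an (N) instance $E_F(\rho_{AB})\ge a+\epsilon$ becomes $\chi_\Phi(\rho_E)\le c-\epsilon/3$, so the promise gap $\epsilon/3=1/\mathrm{poly}(n_i,n_o)$ survives. Membership in NP is clear: an optimal ensemble with at most $n_i^2$ pure states certifies a (Y) instance, since $\chi_\Phi(\rho)=S(\Phi(\rho))-\sum_ip_iS(\Phi(|\psi_i\rangle\langle\psi_i|))$ is then evaluated directly.

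For the regularized statement the same $\Phi$ works: a purification of $\rho_{AB}^{\otimes k}$ is $|\Psi_{ABE}\rangle^{\otimes k}$ with Stinespring isometry $V^{\otimes k}$ and channel $\Phi^{\otimes k}$, so the identity tensorizes to $\chi_{\Phi^{\otimes k}}(\rho_E^{\otimes k})=kS(\rho_A)-E_F(\rho_{AB}^{\otimes k})$; dividing by $k$ and letting $k\to\infty$ yields $\chi_\Phi^\infty(\rho_E)=S(\rho_A)-E_C(\rho_{AB})$ by lemma \ref{pre}(b). The same reduction, now from the NP-hardness of computing $E_C$ in theorem \ref{EM}, shows that computing $\chi_\Phi^\infty(\rho)$ is NP-hard; as with $E_C$, I do not claim it lies in NP.

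I expect the only genuinely non-routine point to be the \emph{exactness} of the ensemble correspondence: checking that $|\psi_i\rangle\mapsto V|\psi_i\rangle$ is a true bijection on pure-state ensembles respecting both the cardinality constraint and the optimized functional, so that $\chi_\Phi(\rho_E)$ and $E_F(\rho_{AB})$ are tied by an equality rather than a pair of inequalities. The supporting facts --- $V^\dagger V=I_{\mathcal{H}_E}$, $VV^\dagger=I_{\mathcal{H}_A\otimes\mathcal{H}_B}$, and the observation that a pure state appearing in a mixture lies in that mixture's support --- make this immediate, and propagating the $1/\mathrm{poly}$ promise gap through the (only approximately known) additive constant $S(\rho_A)$ costs nothing beyond the rescaling $\epsilon\mapsto\epsilon/3$ used above.
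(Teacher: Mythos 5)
Your reduction is correct and is essentially the paper's own proof: the identity $\chi_\Phi(\rho)=S(\Phi(\rho))-E_F(V\rho V^\dagger)$ with $\Phi(X)=\mathrm{tr}_B(VXV^\dagger)$ is exactly the Matsumoto--Shimono--Winter/Shor relation the paper invokes, with the same choice of input state, channel dimensions $n_i=O(mn)$, $n_o=m$, NP certificate, and tensorized version for $E_C$ versus $\chi_\Phi^\infty$. The only differences are that you derive the ensemble bijection explicitly instead of citing it, and that you handle the finite-precision offset $S(\rho_A)$ with the $\epsilon/3$ rescaling, which the paper absorbs into its blanket finite-precision convention.
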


\begin{proof}
Computing $\chi_{\Phi}(\rho)$ is in NP: the certificate of (Y) is the optimal ensemble of pure states $\{p_i,|\psi_i\rangle\}$. The hardness proof is a reduction from theorem \ref{EM} via the relation \cite{MSW04, Sho04} between $E_F$ and $\chi_\Phi(\rho)$. Given a bipartite state $\sigma_{AB}$ of dimension $m\times n$, let $U$ be a unitary embedding such that $\sigma_{AB}=U(\rho)$ for a state $\rho$ of dimension $\mathrm{rank}(\sigma_{AB})$. The quantum channel $\Phi$ is defined as $\Phi(\rho')=\mathrm{tr}_BU(\rho')$, where $n_i=\mathrm{rank}(\sigma_{AB})=O(mn)$ and $n_o=m$. Then
\begin{equation}
E_F(\sigma_{AB})=S(\Phi(\rho))-\chi_\Phi(\rho).
\end{equation}
Set $c=S(\Phi(\rho))-a$. We complete the reduction from $E_F(\sigma_{AB})$ to $\chi_\Phi(\rho)$. The regularized relation
\begin{equation}
E_C(\sigma_{AB})=S(\Phi(\rho))-\chi_\Phi^\infty(\rho)
\end{equation}
implies a reduction from $E_C(\sigma_{AB})$ to $\chi_\Phi^\infty(\rho)$. These reductions are polynomial-time reductions.
\end{proof}

\begin{lemma}[NP-completeness of computing Holevo capacity]
Given a quantum channel $\Phi$ and a real number $c$ with the promise that either (Y) $\chi_\Phi\ge c$ or (N) $\chi_\Phi\le c-\epsilon$, it is NP-complete to decide which is the case, where $\epsilon=1/\mathrm{poly}(n_i,n_o)$.
\label{Holevo}
\end{lemma}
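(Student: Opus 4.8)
The plan is to show membership in NP and then prove NP-hardness by reducing from the constrained Holevo capacity problem of Corollary~\ref{cHolevo}.

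For membership, I would mimic the argument already used for $\chi_\Phi(\rho)$. Writing $\chi_\Phi=\max_{\{p_i,\rho_i\}}\bigl(S(\Phi(\textstyle\sum_ip_i\rho_i))-\sum_ip_iS(\Phi(\rho_i))\bigr)$, one may take the $\rho_i$ pure and, by the same extremality/Carath\'eodory-type counting that bounds the ensemble size in the definition of $\chi_\Phi(\rho)$, restrict to at most $n_i^2$ signal states. The certificate of a yes instance is such an optimal ensemble; given it, the Holevo quantity is a difference of von Neumann entropies of $\mathrm{poly}$-dimensional matrices obtained by applying the polynomially described channel $\Phi$, hence is evaluated to precision $1/\mathrm{poly}(n_i,n_o)$ in polynomial time. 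So the promise problem lies in NP, as for $\chi_\Phi(\rho)$.

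For hardness I would reduce from the hard instances $(\Phi_0,\rho_0,c)$ of Corollary~\ref{cHolevo}, in which the input average is \emph{pinned} to a prescribed state $\rho_0$ on $\mathbb{C}^{n_i}$. The task is to manufacture a single channel $\Phi$ on a polynomially larger input space whose \emph{unconstrained} capacity still only "sees" $\rho_0$, i.e.\ $\chi_\Phi=\chi_{\Phi_0}(\rho_0)+\kappa$ for an efficiently computable constant $\kappa$. Concretely I would build a pinning gadget from one of two ideas: (i) restrict to $\mathrm{supp}\,\rho_0$ so that $\rho_0>0$, then precompose $\Phi_0$ with a filtering completely positive map together with a "defect" output register (or attach a classical flag register that $\Phi$ copies out), so that the Holevo-optimal input average is driven to $\rho_0$; or (ii) preprocess the $E_F$ instance of Theorem~\ref{EM} underlying Corollary~\ref{cHolevo} so that $\rho_0$ becomes the maximally mixed state, and endow $\Phi$ with a discrete, irreducibly acting covariance group, so that concavity of $\bar\rho\mapsto\chi_\Phi(\bar\rho)$ forces the optimum to the unique invariant input $I_{n_i}/n_i$. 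Setting $c$ to the corresponding transformed threshold, case~(Y) of Corollary~\ref{cHolevo} gives $\chi_\Phi\ge c$ and case~(N) gives $\chi_\Phi\le c-\epsilon$ with $\epsilon=1/\mathrm{poly}(n_i,n_o)$, since the gadget only enlarges dimensions polynomially; the reduction is clearly polynomial time. (If a self-contained gadget turns out to be awkward, the statement also follows from the known NP-completeness of computing the Holevo capacity in the literature, but I prefer the direct route.)

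The main obstacle is exactly this pinning gadget: I must prove that no input state of $\Phi$ can beat the intended one by more than $o(\epsilon)$ — i.e.\ that the unconstrained optimization genuinely collapses to the constrained one of Corollary~\ref{cHolevo} — and that the auxiliary conjugations, flags, or symmetrizations (whose relevant parameters, such as $\lambda_{\min}(\rho_0)$, may themselves be only $1/\mathrm{poly}$) do not degrade the promise gap from $1/\mathrm{poly}$ to something exponentially small. The remaining pieces, namely the NP upper bound and the polynomial-time bookkeeping of the reduction, are routine.
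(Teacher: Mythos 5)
There is a genuine gap, and it sits exactly where you locate it: the ``pinning gadget'' is not a technical loose end but the entire mathematical content of the hardness claim, and you do not construct it. Note first that your reduction runs in the opposite direction to the one that comes for free. Since $-\chi_\Phi(\rho)$ is convex in $\rho$, the unconstrained quantity $\chi_\Phi=\sup_\rho\chi_\Phi(\rho)$ is the maximum of a concave function over a convex set, hence computable from an evaluation oracle for $\chi_\Phi(\rho)$ by convex programming; this gives a reduction from lemma \ref{Holevo} to the constrained problem of corollary \ref{cHolevo} (hardness of the unconstrained problem implies hardness of the constrained one), which is precisely the direction the paper's remark mentions. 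The converse direction you need --- recovering the value of a concave function at a prescribed point $\rho_0$ from the global maximum of some modified channel --- does not follow from convexity, and neither of your gadget sketches is shown to work: a flag or filtering register is not shown to drive the Holevo-optimal average input to $\rho_0$ (a channel acts on density matrices, not ensembles, so it cannot simply ``record which state was sent''); and the covariance idea requires $\rho_0$ to be the unique invariant state of an irreducibly acting group, which fails for the generic non-flat-spectrum $\rho_0$ produced by corollary \ref{cHolevo}, while any symmetrization of the channel changes $\chi_\Phi$ in a way you have not controlled. You would also need to show the gadget preserves a $1/\mathrm{poly}(n_i,n_o)$ promise gap, which you flag but do not do.

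For what it is worth, the paper does not prove this lemma either: it attributes the statement to Beigi and Shor \cite{BS07}, whose argument is a direct reduction from the separability problem rather than a pinning of the constrained capacity, and it explicitly notes that additional, unpresented work is needed to obtain $\epsilon=1/\mathrm{poly}(n_i,n_o)$. Your parenthetical fallback of citing the literature is therefore closer to what the paper actually does, though even that route must address the gap scaling. Your NP-membership argument (certificate an ensemble of at most $n_i^2$ pure states witnessing $\chi_\Phi\ge c$) is fine.
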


\begin{remark}
This is one of the main results of \cite{BS07}, in which, however, the scaling of $\epsilon$ is not discussed. Indeed, additional work is needed to establish the NP-completeness of computing $\chi_\Phi$ with $\epsilon=1/\mathrm{poly}(n_i,n_o)$. We are not going to present the complete proof here. The computational complexity of $\chi_\Phi^\infty$ remains an open problem. The set of all states of dimension $n_i$ is convex, and $-\chi_\Phi(\rho)$ is a convex function as $S(\Phi(\rho))$ is concave and the infimum in (\ref{constrain}) is convex. Thus an alternative proof of the NP-hardness of computing $\chi_\Phi(\rho)$ is a polynomial-time reduction from lemma \ref{Holevo} via convex optimization \cite{BV04}; moreover, the NP-hardness of computing $E_F$ can be proved\footnote{Mark M Wilde, private communication.} based on lemma \ref{Holevo}.
\end{remark}

\section{Applications}

Common randomness is a resource in information theory and cryptography \cite{AC93, AC98}. One-way distillable common randomness $D_{cr}(\rho_{AB}|B)$ is the maximum rate at which common randomness can be extracted from the bipartite state $\rho_{AB}$ by local operations and one-way classical communication in the asymptotic limit. It is equal to regularized classical correlation $J_P^\infty(\rho_{AB}|B)$ \cite{DW04}, and also equal to regularized one-way classical deficit \cite{DW05}. Thus $D_{cr}$ and regularized one-way classical deficit are NP-hard to compute.

In quantum state merging, Alice and Bob share a bipartite state, and the goal is to transfer Alice's part of the state to Bob by entanglement-assisted LOCC \cite{HOW05, HOW07}. The minimum amount of entanglement that must be consumed in extended quantum state merging (a variant of quantum state merging) is an operational interpretation of quantum discord \cite{CAB+11}, and thus NP-complete to compute.

Quantum discord quantifies the effect of decoherence in a family of protocols. It is the minimum difference between the yield of the fully quantum Slepian-Wolf (FQSW) protocol \cite{ADHW09} in the presence and absence of decoherence \cite{MD13}. The same holds for all descendant protocols of FQSW, where `yield' refers to the amount of entanglement consumed in quantum state merging \cite{MD11}, the amount of distilled entanglement in entanglement distillation \cite{SKB11}, the amount of classical information encoded in superdense coding, and the number of teleported qubits in quantum teleportation (see \cite{MD13} for details). Thus computing the minimum loss due to decoherence of the yield of all aforementioned protocols is NP-complete.

\section{Computational complexity of classical states}

A bipartite state $\rho_{AB}$ is separable if it can be expressed as
\begin{equation}
\rho_{AB}=\sum_ip_i|\psi_i^A\rangle\langle\psi_i^A|\otimes|\psi_i^B\rangle\langle\psi^B_i|,
\label{S}
\end{equation}
where $|\psi_i^A\rangle,|\psi_i^B\rangle$ are pure states in the subsystems $A,B$, respectively, and $p_i\ge0$ satisfies $\sum_ip_i=1$. $\rho_{AB}$ is quantum-classical if
\begin{equation}
\rho_{AB}=\sum_ip_i\rho_i^A\otimes\Pi_i^B,
\label{QC}
\end{equation}
where $\rho_i^A$'s are normalized, possibly mixed states in $A$, and $\{\Pi_i^B\}$ is a von Neumann measurement on $B$. $\rho_{AB}$ is quantum-classical if and only if $D(\rho_{AB}|B)=0$ \cite{OZ01} (note that $D_N(\rho_{AB})=0$ if and only if $D_P(\rho_{AB})=0$). $\rho_{AB}$ is classical-classical if
\begin{equation}
\rho_{AB}=\sum_{i,j}p_{ij}\Pi_i^A\otimes\Pi_j^B,
\end{equation}
where $p_{ij}\ge0$ satisfies $\sum_{i,j}p_{ij}=1$.

\begin{lemma}[NP-completeness of linear optimization over separable states \cite{Ioa07}]
Given an operator $O$ on a bipartite Hilbert space of dimension $m\times n$ and a real number $d$ with the promise that either (Y) $\max_{\rho_{AB}\in\cal{S}}\mathrm{tr}(\rho_{AB}O)\ge d$ or (N) $\max_{\rho_{AB}\in\cal{S}}\mathrm{tr}(\rho_{AB}O)\le d-\varepsilon$, it is NP-complete to decide which is the case, where $\varepsilon=1/\mathrm{poly}(m,n)$.
\end{lemma}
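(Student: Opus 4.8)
\emph{Plan and membership in NP.} The plan is to place the problem in NP by a direct argument and to prove NP-hardness by reducing from the separability problem of lemma~\ref{Sep}. For membership: $\mathrm{tr}(\rho_{AB}O)$ is linear in $\rho_{AB}$ and $\cal S$ is compact, so the maximum is attained at an extreme point of $\cal S$, i.e.\ at a product pure state, and $\max_{\rho_{AB}\in\cal S}\mathrm{tr}(\rho_{AB}O)=\max_{|a\rangle,|b\rangle}\langle a|\langle b|O|a\rangle|b\rangle$ over unit vectors $|a\rangle,|b\rangle$ in the subsystems $A,B$. A certificate for a yes instance (Y) is thus a single pair $|a\rangle,|b\rangle$ with entries written in polynomially many bits, and the verifier accepts iff $\langle a|\langle b|O|a\rangle|b\rangle\ge d-\varepsilon/2$. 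In case (Y) the true maximum is at least $d$, so by continuity a polynomial-precision truncation of the optimal product vectors still attains value at least $d-\varepsilon/2$; in case (N) every product state gives value at most $d-\varepsilon<d-\varepsilon/2$, so nothing is accepted. This rounding is the standard device for accommodating the finite precision of the promise formulation.

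\emph{NP-hardness.} I would reduce from lemma~\ref{Sep}, whose instances are precisely instances of the weak membership problem for $\cal S$ with an inverse-polynomial tolerance. Two facts about $\cal S$ are used. First, $\cal S$ is a polynomially well-bounded convex body: it lies inside the Frobenius ball of radius $1$ about the origin and contains a Frobenius ball of radius $r=1/\mathrm{poly}(m,n)$ about the maximally mixed state $I/(mn)$ (the Gurvits--Barnum separable ball), with $r$ computable in polynomial time. Second, for any such well-bounded convex body the weak membership problem reduces in polynomial time to polynomially many calls to a weak linear optimization oracle for the body: one applies the ellipsoid method to the concave function $X\mapsto\mathrm{tr}(\rho X)-\max_{\sigma\in\cal S}\mathrm{tr}(\sigma X)$, whose value and a supergradient at each $X$ are read off from an approximate optimizer returned by the oracle; $\rho\in\cal S$ is equivalent to this function being $\le 0$ on the unit Frobenius ball, whereas if $\rho$ is $\delta$-far from $\cal S$ it attains a value of order $\delta$ there. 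It is essential that $r$ is inverse-polynomial rather than merely exponentially small, so that the accuracy demanded of the optimization oracle --- and the gap $\delta$ one must assume in lemma~\ref{Sep} --- degrade by only a polynomial factor; taking $\delta$ a suitable inverse polynomial then yields the desired $\varepsilon=1/\mathrm{poly}(m,n)$. Hence a polynomial-time algorithm for the weak linear optimization problem would solve the separability problem, so the former is NP-hard, and with the first part it is NP-complete.

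\emph{The main obstacle.} The real work is the quantitative bookkeeping in the reduction: checking that with an inverse-polynomial inner radius the ellipsoid method terminates after polynomially many iterations, that every intermediate rational has polynomial bit-length, and --- the key point --- that the accuracy required of the optimization oracle and the gap $\delta$ fed to lemma~\ref{Sep} both remain inverse-polynomial, so that the inverse-polynomial-gap hardness is genuinely inherited. A minor further point is that this produces a Turing reduction; promoting it to a many-one reduction, as ``NP-complete'' strictly requires, uses the standard packaging of the oracle calls, which goes through because $\cal S$ is well bounded. Finally, one cannot sidestep the ellipsoid method by a one-shot reduction --- building a single operator $O$ from $\rho$, say $O=\rho$, so that $\max_{\sigma\in\cal S}\mathrm{tr}(\sigma O)$ alone decides the instance --- because $\max_{\sigma\in\cal S}\mathrm{tr}(\sigma\rho)$ is merely the largest overlap of $\rho$ with a product pure state, which says nothing about whether $\rho$ itself is separable; extracting a genuinely separating functional amounts to projecting $\rho$ onto $\cal S$, i.e.\ to the original problem.
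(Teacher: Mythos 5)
The paper does not prove this lemma at all---it is imported as a known result from the cited reference---so there is no in-paper argument to compare against. Your reconstruction is the standard proof from that literature and is sound in outline: the extreme points of $\cal S$ are pure product states, which yields the NP certificate, and hardness follows from the Gr\"{o}tschel--Lov\'{a}sz--Schrijver-type equivalence of weak membership and weak linear optimization for well-bounded convex bodies, where the Gurvits--Barnum inner ball of inverse-polynomial radius is precisely what keeps the precision loss polynomial so that the $1/\mathrm{poly}(m,n)$ gap of lemma \ref{Sep} is inherited. Your caveats (Turing versus many-one reductions, and the failure of any one-shot reduction such as $O=\rho$) are also the right ones.
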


Let $\cal{QC}$ ($\cal{CC}$) be the set of all quantum-classical (classical-classical) states.

\begin{proposition}[NP-completeness of linear optimization over classical states]
Given an operator $O$ on a bipartite Hilbert space of dimension $m\times n$ and a real number $d$ with the promise that either (Y) $\max_{\rho_{AB}\in\cal{CC}}\mathrm{tr}(\rho_{AB}O)\ge d$ or (N) $\max_{\rho_{AB}\in\cal{CC}}\mathrm{tr}(\rho_{AB}O)\le d-\varepsilon$, it is NP-complete to decide which is the case, where $\varepsilon=1/\mathrm{poly}(m,n)$. The same holds for linear optimization over $\cal{QC}$.
\label{prop1}
\end{proposition}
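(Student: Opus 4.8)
The plan is to reduce from linear optimization over separable states (the lemma just stated). The starting point is the observation that the maximum of a linear functional $\mathrm{tr}(\rho_{AB}O)$ over the convex set $\mathcal S$ of separable states is attained at an extreme point, and the extreme points of $\mathcal S$ are exactly the pure product states $|\psi^A\rangle\langle\psi^A|\otimes|\psi^B\rangle\langle\psi^B|$. So $\max_{\rho\in\mathcal S}\mathrm{tr}(\rho O)=\max_{|\psi^A\rangle,|\psi^B\rangle}\langle\psi^A|\langle\psi^B|O|\psi^A\rangle|\psi^B\rangle$. Now every pure product state is in particular classical-classical (take $p_{ij}$ supported on a single index pair, with $\Pi^A_1=|\psi^A\rangle\langle\psi^A|$ completed to a basis, similarly for $B$), and $\mathcal{CC}\subseteq\mathcal{QC}\subseteq\mathcal S$. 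Hence
\begin{equation}
\max_{\rho\in\mathcal{CC}}\mathrm{tr}(\rho O)=\max_{\rho\in\mathcal{QC}}\mathrm{tr}(\rho O)=\max_{\rho\in\mathcal S}\mathrm{tr}(\rho O),
\end{equation}
with the first two equalities coming from the product-state extreme points lying in $\mathcal{CC}$ and the last from optimizing a linear functional over a convex hull. Consequently the very same instance $(O,d)$ that witnesses NP-hardness for $\mathcal S$ witnesses it for $\mathcal{CC}$ and $\mathcal{QC}$, with the identical gap $\varepsilon=1/\mathrm{poly}(m,n)$; this is a trivial (identity) polynomial-time reduction.

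For membership in NP, the certificate for a yes instance is a classical-classical (respectively quantum-classical) state achieving $\mathrm{tr}(\rho O)\ge d$; since such a state may be taken to be a single pure product state, the certificate is just a pair of unit vectors $|\psi^A\rangle,|\psi^B\rangle$ specified to polynomially many bits, and the verifier computes $\langle\psi^A|\langle\psi^B|O|\psi^A\rangle|\psi^B\rangle$ and checks it is at least $d-\varepsilon/2$ or so, accounting for finite precision. Combined with the hardness reduction, this gives NP-completeness for both $\mathcal{CC}$ and $\mathcal{QC}$.

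The only genuinely delicate point is bookkeeping with the approximate formulation: I must check that rounding the optimal $|\psi^A\rangle,|\psi^B\rangle$ to polynomially many bits perturbs $\mathrm{tr}(\rho O)$ by at most, say, $\varepsilon/3$ (which follows from $\|O\|$ being at most polynomially large and the bilinearity of the expression in the two vectors), and that the promise gap in the separability lemma survives being passed verbatim to the $\mathcal{CC}$/$\mathcal{QC}$ problem — but since the three maxima are \emph{exactly} equal, the gap is literally unchanged, so this is immediate. I would also remark that the equality of the three optima means no separate argument is needed for $\mathcal{QC}$: it is sandwiched between $\mathcal{CC}$ and $\mathcal S$. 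In short, the proof is essentially the single displayed chain of equalities above plus a one-line NP verification, and I do not anticipate any real obstacle.
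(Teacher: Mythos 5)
Your proof is correct and takes essentially the same route as the paper: both establish $\max_{\mathcal{CC}}=\max_{\mathcal{QC}}=\max_{\mathcal S}$ by sandwiching via $\mathcal{CC}\subseteq\mathcal{QC}\subseteq\mathcal S$ and observing that the separable optimum is achieved on pure product states, which lie in $\mathcal{CC}$, so the identity reduction from the separability lemma applies with the same gap. The only cosmetic difference is that the paper bounds $\mathrm{tr}(\sigma_{AB}O)$ directly by writing a separable state as a convex combination of pure product states rather than invoking extreme points of $\mathcal S$.
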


\begin{proof}
Linear optimization over $\cal{CC}$ is in NP: the certificate of (Y) is the optimal state $\rho_{AB}$. $\cal{CC}\subseteq\cal{QC}\subseteq\cal{S}$ implies
\begin{equation}
\max_{\rho_{AB}\in\cal{CC}}\mathrm{tr}(\rho_{AB}O)\le\max_{\rho_{AB}\in\cal{QC}}\mathrm{tr}(\rho_{AB}O)\le\max_{\rho_{AB}\in\cal{S}}\mathrm{tr}(\rho_{AB}O).
\end{equation}
For any separable state $\sigma_{AB}=\sum_ip_i|\psi_i^A\rangle\langle\psi_i^A|\otimes|\psi_i^B\rangle\langle\psi^B_i|$,
\begin{equation}
\mathrm{tr}(\sigma_{AB}O)=\sum_ip_i\mathrm{tr}(|\psi_i^A\rangle\langle\psi_i^A|\otimes|\psi_i^B\rangle\langle\psi^B_i|O)\le\sum_ip_i\max_{\rho_{AB}\in\cal{CC}}\mathrm{tr}(\rho_{AB}O)=\max_{\rho_{AB}\in\cal{CC}}\mathrm{tr}(\rho_{AB}O),
\end{equation}
as $|\psi_i^A\rangle\langle\psi_i^A|\otimes|\psi_i^B\rangle\langle\psi^B_i|\in\cal{CC}$ and $\sum_ip_i=1$ \cite{RS10}. Thus,
\begin{equation}
\max_{\rho_{AB}\in\cal{CC}}\mathrm{tr}(\rho_{AB}O)=\max_{\rho_{AB}\in\cal{QC}}\mathrm{tr}(\rho_{AB}O)=\max_{\rho_{AB}\in\cal{S}}\mathrm{tr}(\rho_{AB}O).
\end{equation}
\end{proof}

\begin{lemma}[\cite{DR08, LL08}]
A bipartite quantum state $\rho_{AB}$ is separable if and only if there exists a state $\rho_{AA'|BB'}\in\cal{CC}$ in an extended Hilbert space such that $\rho_{AB}=\mathrm{tr}_{A'B'}\rho_{AA'BB'}$, or if and only if a state $\rho_{A|BB'}\in\cal{QC}$ exists such that $\rho_{AB}=\mathrm{tr}_{B'}\rho_{ABB'}$.
\end{lemma}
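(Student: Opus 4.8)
The plan is to prove the two ``if and only if'' statements together by closing the cycle of implications: (i) if $\rho_{AB}$ is separable, then a classical-classical extension $\rho_{AA'BB'}\in\cal{CC}$ (across $AA'|BB'$) with $\mathrm{tr}_{A'B'}\rho_{AA'BB'}=\rho_{AB}$ exists; (ii) from such a $\cal{CC}$ extension one obtains, by discarding $A'$, a quantum-classical extension $\rho_{ABB'}\in\cal{QC}$ (across $A|BB'$) with $\mathrm{tr}_{B'}\rho_{ABB'}=\rho_{AB}$; (iii) from such a $\cal{QC}$ extension, discarding $B'$ shows that $\rho_{AB}$ is separable. The single idea behind the construction is to attach classical ``flag'' registers recording which term of a separable decomposition one sits in; the mutual orthogonality of the flags is exactly what promotes a generic convex combination of product states to a state diagonal in a product basis.

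For (i), I would write $\rho_{AB}=\sum_{i=1}^{N}p_i\,|\psi_i^A\rangle\langle\psi_i^A|\otimes|\psi_i^B\rangle\langle\psi_i^B|$ (such a decomposition exists by the definition of separability, and by Carath\'eodory with $N\le m^2n^2$, so the ancilla dimensions stay polynomial in $m,n$), introduce ancillas $A',B'$ carrying orthonormal families $\{|i\rangle_{A'}\}_{i=1}^{N}$ and $\{|i\rangle_{B'}\}_{i=1}^{N}$, and set
\begin{equation}
\rho_{AA'BB'}=\sum_{i=1}^{N}p_i\,\bigl(|\psi_i^A\rangle\langle\psi_i^A|\otimes|i\rangle\langle i|_{A'}\bigr)\otimes\bigl(|\psi_i^B\rangle\langle\psi_i^B|\otimes|i\rangle\langle i|_{B'}\bigr).
\end{equation}
The vectors $|\psi_i^A\rangle|i\rangle_{A'}$ are orthonormal no matter what the $|\psi_i^A\rangle$ are (the flags alone force it), so they complete to an orthonormal basis of $AA'$, and similarly on $BB'$; hence $\rho_{AA'BB'}$ has the form $\sum_{j,k}q_{jk}\,\Pi_j^{AA'}\otimes\Pi_k^{BB'}$ with $\{\Pi_j^{AA'}\}$ and $\{\Pi_k^{BB'}\}$ von Neumann measurements (the adjoined basis vectors carrying weight zero), i.e.\ $\rho_{AA'BB'}\in\cal{CC}$, and $\mathrm{tr}_{A'B'}\rho_{AA'BB'}=\rho_{AB}$ by inspection.

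Steps (ii) and (iii) are routine partial-trace bookkeeping. For (ii): given $\rho_{AA'BB'}=\sum_{j,k}q_{jk}\,\Pi_j^{AA'}\otimes\Pi_k^{BB'}\in\cal{CC}$, discarding $A'$ yields $\sum_k\bigl(\sum_j q_{jk}\,\mathrm{tr}_{A'}\Pi_j^{AA'}\bigr)\otimes\Pi_k^{BB'}$; the $A$-blocks are positive operators, hence (after normalization) density matrices $\rho_k^A$, while $\{\Pi_k^{BB'}\}$ is still a von Neumann measurement on $BB'$, so this is a $\cal{QC}$ state across $A|BB'$, and tracing out $B'$ as well returns $\mathrm{tr}_{A'B'}\rho_{AA'BB'}=\rho_{AB}$. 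For (iii): given $\rho_{ABB'}=\sum_i p_i\,\rho_i^A\otimes\Pi_i^{BB'}\in\cal{QC}$, discarding $B'$ gives $\rho_{AB}=\sum_i p_i\,\rho_i^A\otimes\mathrm{tr}_{B'}\Pi_i^{BB'}$, a nonnegative combination of product positive operators (the partial trace of a projector is positive), hence separable once the scalars $\mathrm{tr}(\mathrm{tr}_{B'}\Pi_i^{BB'})$ are absorbed.

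I do not anticipate a genuine obstacle: the statement is elementary once the flag gadget is in place, which is why it already appears in \cite{DR08, LL08}. The only points deserving care are verifying that the extension in (i) is honestly classical-classical --- this rests solely on the flag orthogonality, not on any relation among the $|\psi_i^A\rangle$ --- and completing the orthonormal families to full bases so that the defining von Neumann measurements really exist; a harmless preliminary refinement of the separable decomposition (to pure, and if desired rank-one, product terms) disposes of the remaining edge cases, and the bound $N\le m^2n^2$ is what keeps the lemma usable inside polynomial-time reductions.
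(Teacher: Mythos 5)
Your proof is correct. The paper itself gives no proof of this lemma---it is quoted from \cite{DR08, LL08}---and your flag construction (orthonormal ancilla registers recording the index of each term in a Carath\'eodory-bounded separable decomposition, followed by partial traces to close the cycle separable $\Rightarrow$ $\cal{CC}$ extension $\Rightarrow$ $\cal{QC}$ extension $\Rightarrow$ separable) is exactly the standard argument behind those references; your dimension count $N\le m^2n^2$ also reproduces the bounds $m^3n^2\times m^2n^3$ and $m\times m^2n^3$ stated in the paper's remark following the lemma.
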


\begin{remark}
The definition (\ref{S}) of separability remains the same if the number of terms in the summation is restricted to be less than or equal to $m^2n^2$ \cite{Hor97}, where $m\times n$ is the dimension of the bipartite state $\rho_{AB}$. By slightly modifying the original proofs in \cite{DR08, LL08}, the dimensions of $\rho_{AA'|BB'}$ and $\rho_{A|BB'}$ can be required to be $m^3n^2\times m^2n^3$ and $m\times m^2n^3$, respectively.
\end{remark}

\begin{proposition}[NP-completeness of detecting classical states in a convex set]
Given a convex set $K$ of bipartite quantum states ($K$ is given by a polynomial-time algorithm outputting whether a state is in $K$) with the promise that either (Y) $K\cap\cal{CC}\neq\emptyset$ or (N) $\inf_{\rho\in K,\sigma\in\cal{CC}}\|\rho-\sigma\|_1\ge\delta$, it is NP-complete to decide which is the case, where $\delta=1/\mathrm{poly}(m,n)$. The same holds for detecting quantum-classical states in $K$.
\label{prop2}
\end{proposition}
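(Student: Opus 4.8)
The plan is to reduce from the separability problem (lemma~\ref{Sep}), using the characterization, stated just above, of separable states via classical states in an extended Hilbert space. Given an instance $\rho_{AB}$ of dimension $m\times n$, I would let $K$ be the set of all density matrices $\rho_{AA'BB'}$ on the extended space of dimension $m^3n^2\times m^2n^3$ (the dimension afforded by the preceding remark) satisfying $\mathrm{tr}_{A'B'}\rho_{AA'BB'}=\rho_{AB}$. This $K$ is convex, being the intersection of the state space with an affine subspace, and membership in $K$ is decidable in polynomial time (check positive semidefiniteness, unit trace, and the linear constraint up to the allowed precision), so $\rho_{AB}\mapsto(K,\delta)$ is a polynomial-time reduction. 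For the quantum-classical version I would instead take $K$ to be the states $\rho_{ABB'}$ of dimension $m\times m^2n^3$ with $\mathrm{tr}_{B'}\rho_{ABB'}=\rho_{AB}$.

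Next I would check that the promise is preserved. If $\rho_{AB}\in\mathcal{S}$, the cited lemma (together with the remark's dimension bound) produces a state in $K\cap\mathcal{CC}$, so (Y) holds. Conversely, for any $\rho'\in K$ and any $\mathcal{CC}$ state $\sigma$ on the extended space, tracing $A'B'$ out of $\sigma$ gives $\tau:=\mathrm{tr}_{A'B'}\sigma=\sum_{ij}p_{ij}\,\alpha_i\otimes\beta_j$, a convex combination of products of (possibly mixed) local states, hence $\tau\in\mathcal{S}$; since the partial trace is trace-norm contractive and $\mathrm{tr}_{A'B'}\rho'=\rho_{AB}$,
\begin{equation}
\|\rho'-\sigma\|_1\ge\|\rho_{AB}-\tau\|_1\ge\inf_{\tau'\in\mathcal{S}}\|\rho_{AB}-\tau'\|_1\ge\inf_{\tau'\in\mathcal{S}}\|\rho_{AB}-\tau'\|_2,
\end{equation}
using $\|X\|_1\ge\|X\|_2$. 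Thus if $\rho_{AB}$ is a no-instance of lemma~\ref{Sep}, then $\inf_{\rho'\in K,\sigma\in\mathcal{CC}}\|\rho'-\sigma\|_1\ge\delta$, i.e.\ (N) holds. The quantum-classical case is identical: tracing $B'$ out of a state that is quantum-classical across $A|BB'$ again yields a separable state.

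Finally, NP membership: the certificate of (Y) is a classical-classical (resp.\ quantum-classical) state $\sigma\in K$, which in canonical form $\sigma=\sum_{ij}p_{ij}\Pi_i^{AA'}\otimes\Pi_j^{BB'}$ with rank-one orthonormal projectors has at most $(m^3n^2)(m^2n^3)=\mathrm{poly}(m,n)$ terms, hence a polynomial-size description, while membership in $K$ is verifiable in polynomial time. I expect the only delicate point to be dimension control: the polynomial bounds of the preceding remark (rather than the generic extension of \cite{DR08, LL08}) are what make both the reduction and the (Y)-certificate polynomial in size. The norm bookkeeping — converting the Frobenius-norm gap of lemma~\ref{Sep} into a trace-norm gap and pushing it through the partial trace — is then routine.
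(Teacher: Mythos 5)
Your proposal is correct and follows essentially the same route as the paper: a reduction from the separability problem using the lemma of \cite{DR08, LL08}, taking $K$ to be the set of extensions of $\rho_{AB}$, and pushing the gap through the contractivity of $\|\cdot\|_1$ under partial trace together with $\|X\|_1\ge\|X\|_2$. The extra details you supply (the explicit dimension bound from the remark, the separability of the partial trace of a $\cal{CC}$ state, and the polynomial size of the (Y)-certificate) are correct and merely make explicit what the paper leaves implicit.
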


\begin{proof}
Detecting classical-classical states in $K$ is in NP: the certificate of (Y) is an element in $K\cap\cal{CC}\neq\emptyset$. The hardness proof is a polynomial-time reduction from lemma \ref{Sep}. Given a bipartite state $\rho_{AB}$, define the convex set
\begin{equation}
K=\{\rho_{AA'|BB'}|\rho_{AB}=\mathrm{tr}_{A'B'}\rho_{AA'BB'}\}.
\end{equation}
(Y) If $\rho_{AB}$ is separable, then $K\cap\cal{CC}\neq\emptyset$. (N) If $\inf_{\sigma_{AB}\in\cal{S}}\|\rho_{AB}-\sigma_{AB}\|_2\ge\delta$, then for any $\rho_{AA'BB'}\in K$ and $\sigma_{AA'BB'}\in\cal{CC}$,
\begin{equation}
\|\rho_{AA'BB'}-\sigma_{AA'BB'}\|_1\ge\|\mathrm{tr}_{A'B'}(\rho_{AA'BB'}-\sigma_{AA'BB'})\|_1=\|\rho_{AB}-\sigma_{AB}\|_1\ge\|\rho_{AB}-\sigma_{AB}\|_2\ge\delta,
\end{equation}
as $\|\cdot\|_1$ is non-increasing under partial trace \cite{LZK08} and $\sigma_{AB}=\mathrm{tr}_{A'B'}\sigma_{AA'BB'}$ is separable. The NP-completeness of detecting quantum-classical states in $K$ can be proved analogously.
\end{proof}

\section{Conclusion and outlook}

We have proved that computing quantum discord is NP-complete. Therefore, the running time of any algorithm for computing quantum discord is believed to grow exponentially with the dimension of the Hilbert space so that computing quantum discord in a quantum system of moderate size is not possible in practice. As by-products, some entanglement measures and constrained Holevo capacity are NP-hard/NP-complete to compute. These complexity-theoretic results are directly applicable in quantum information processing, and may offer significant insights. Moreover, we have proved the NP-completeness of two typical problems related to classical states, providing evidence that working with classical states is generically computationally intractable.

The NP-completeness of computing quantum discord raises some interesting open problems. Is there an efficient approximation algorithm for computing quantum discord up to a moderate (e.g. constant additive) error? Can quantum discord be efficiently computed for certain important classes of states? What is the computational complexity of other measures of quantum correlation beyond entanglement (e.g. geometric quantum discord \cite{DVB10, LF10}, quantum deficit)? The computational complexity of quantum correlation in continuous-variable systems is a new research direction. In particular, Gaussian states are of great theoretical and experimental interest \cite{WHTH07, WPG+12}. The separability problem for multimode bipartite Gaussian states \cite{WW01} can be formulated as a semidefinite program \cite{HE06} and solved efficiently in theory and practice \cite{VB96} (the analog of lemma \ref{Sep} for Gaussian states is false). What is the computational complexity of Gaussian entanglement of formation \cite{WGK+04} and Gaussian quantum discord \cite{AD10, GP10}?

\section*{Acknowledgements}

The author would like to thank Sevag Gharibian, Joel E Moore and Mark M Wilde for useful comments. This work was supported by the ARO via the DARPA OLE program.

\end{document}